\pgfplotsset{compat = newest}
\newcommand\mr[1]{\multicolumn{2}{r}{#1}}
\newtheorem{theorem}{Theorem}
\newtheorem{example}{Example}
\newlength{\commentlen}
\title{Increasing Ticketing Allocative Efficiency Using Marginal Price Auction Theory}
\author[]{Boxiang Fu}
\affil[]{boxiangf@student.unimelb.edu.au \\ 
Faculty of Science, University of Melbourne, Melbourne, VIC 3010, Australia}
\date{\today}
\begin{document}

\maketitle

\begin{abstract}
Most modern ticketing systems rely on a first-come-first-serve or randomized allocation system to determine the allocation of tickets. Such systems has received considerable backlash in recent years due to its inequitable allotment and allocative inefficiency. We analyze a ticketing protocol based on a variation of the marginal price auction system. Users submit bids to the protocol based on their own utilities. The protocol awards tickets to the highest bidders and determines the final ticket price paid by all bidders using the lowest winning submitted bid. Game theoretic proof is provided to ensure the protocol more efficiently allocates the tickets to the bidders with the highest utilities. We also prove that the protocol extracts more economic rents for the event organizers and the non-optimality of ticket scalping under time-invariant bidder utilities.
\end{abstract}


\section{Introduction}

Current ticket allocation systems used by most major ticketing websites operate on a first-come-first-serve or randomized allocation basis. Such a system has caused considerable backlash over recent years due to its opaque criteria for allocation and the need to compete for who can refresh the ticketing webpage the fastest in the milliseconds after tickets are released for sale (see Ref. \cite{Maiden2023}). Economically, current systems are also largely inefficient in allocating the tickets to the consumers with the highest utility for the tickets, thereby resulting in a loss in total allocative efficiency.

We propose a ticketing protocol based on the marginal price auction system. The protocol allocates the tickets to the bidders with the highest bids and the price paid by all bidders is the lowest winning submitted bid. The protocol provably increases the total allocative efficiency compared to current allocation systems by assigning the tickets to the group of consumers with the highest utility. We also prove that the proposed system increases the economic rents extracted for the seller as well as offering a partial solution to the ticket scalping problem by proving that rational bidders with time-invariant utilities will refrain from buying scalped tickets.

\section{Protocol Description}

We begin by briefly summarizing ticketing systems based on a first-come-first-serve protocol (see Ref. \cite{Anthony2016}). Prior to the tickets going on sale, the seller publicly announces a time at which the bulk of the tickets are available for purchase. Users typically enter into the ticketing webpage prior to the tickets going on sale and compete on refreshing the webpage immediately after the ticket sale time commences. Users are then served based on their chronological time-stamp registered with the ticketing webpage. The tickets are progressively sold until the allotment has been exhausted or until all users wishing to purchase has been served. Fig. \ref{FigFCFS} briefly outlines the timeline of a first-come-first-serve ticketing system.

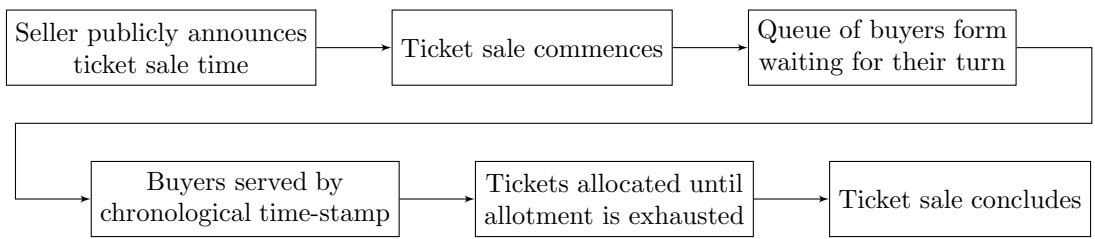
\begin{figure}[h]
    \centering

    \begin{tikzpicture}[>=latex']
        \tikzset{block/.style= {draw, rectangle, align=center,minimum width=2cm,minimum height=1cm},
        rblock/.style={draw, shape=rectangle,rounded corners=1.5em,align=center,minimum width=2cm,minimum height=1cm},
        input/.style={ 
        draw,
        trapezium,
        trapezium left angle=60,
        trapezium right angle=120,
        minimum width=2cm,
        align=center,
        minimum height=1cm
    },
        }
        \node [block]  (start) {Seller publicly announces \\ ticket sale time};
        \node [block, right =1cm of start] (node1) {Ticket sale commences};
        \node [block, right =1cm of node1] (node3) {Queue of buyers form \\ waiting for their turn};
        \node [block, below right =1cm and -3cm of start] (node4) {Buyers served by \\ chronological time-stamp};
        \node [block, right =1cm of node4] (node5) {Tickets allocated until \\ allotment is exhausted};
        \node [block, right =1cm of node5] (node6) {Ticket sale concludes};
        \node [coordinate, below right =0.5cm and 1cm of node3] (right) {};  
        \node [coordinate,above left =0.5cm and 1cm of node4] (left) {};  
        \path[draw,->] (start) edge (node1)
                    (node1) edge (node3)
                    (node3.east) -| (right) -- (left) |- (node4)
                    (node4) edge (node5)
                    (node5) edge (node6)
                    ;
    \end{tikzpicture}

    \caption{Timeline of Key Steps in a First-Come-First-Serve Ticketing System}
    \label{FigFCFS}
\end{figure}

Such a system is inefficient both in terms of time and allocation. Most first-come-first-serve systems require the user to be physically on the webpage waiting in the queue to be able to participate in the allocation, with queuing time possibly taking hours for large events (see Ref. \cite{Maiden2023} for the case of Taylor Swift’s 2023 Australian tour). Economically, the system is also not allocative efficient in most cases. In the common case where demand exceeds supply, the first-come-first-serve system allocates tickets based on chronological ordering, and potentially leaves many buyers with higher utility without an allocation (see Fig. \ref{FigAllocEff} and Example \ref{Example1}).

We propose an alternative system for ticket allocation based on the marginal price auction system. The system is a multi-unit generalization of the Vickrey auction system (see Ref. \cite{Vickrey1961}). In a marginal price auction, a fixed number of units of a homogeneous commodity is put forward for auction. Bidders submit bids for the units via a (usually) sealed-bid auction. The auctioneer allocates the units to the bidders with the highest bids until the allocation is exhausted. The price paid on each unit for all bidders is the lowest winning submitted bid (see Fig. \ref{FigVickrey}). The marginal price auction system has some particularly useful game theoretic properties that are explored in the next section. For now, we outline our proposed ticket allocation mechanism.

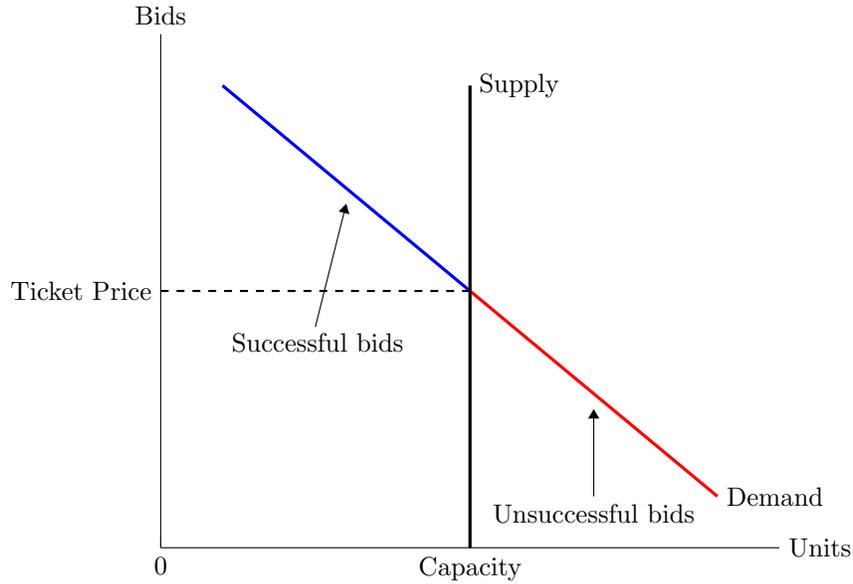
\begin{figure}[h]

    \begin{center}
    \begin{tikzpicture}
    \begin{axis}[
    scale = 1.2,
    xmin = 0, xmax = 10,
    ymin = 0, ymax = 10,
    axis lines* = left,
    xtick = {0}, ytick = \empty,
    clip = false,
    ]
    \addplot[color = blue, very thick] coordinates {(1, 9) (5, 5)};
    \addplot[color = red, very thick] coordinates {(5, 5) (9, 1)};
    \addplot[color = black, very thick] coordinates {(5, 0) (5, 9)};
    
    \addplot[color = black, dashed, thick] coordinates {(0, 5) (5, 5)};
    
    \node [right] at (current axis.right of origin) {Units};
    \node [above] at (current axis.above origin) {Bids};
    \node [above] at (5, 5.2) {};
    \node [left] at (0, 5) {Ticket Price};
    \node [below] at (5, 0) {Capacity};
    \node [right] at (9, 1) {Demand};
    \node [right] at (5, 9) {Supply};
    
    \node [right] at (1, 4) {Successful bids};
    \node [above] at (7, 0.3) {Unsuccessful bids};
    
    \draw[-Triangle] (2.5, 4.3) to (3, 6.7);
    \draw[-Triangle] (7, 1) to (7, 2.7);
    
    \end{axis}
    \end{tikzpicture}
    \end{center}
    
    \caption{Ticket Allocation and Pricing in a Marginal Price Auction System}
    \label{FigVickrey}
\end{figure}

The timeline of our proposed marginal price ticket allocation system is outlined in Fig. \ref{FigMPS}. Instead of publicly announcing a ticket sale commencement time, the seller instead announces a time window for bid submission. During this window, bidders are free to submit bids for one or more tickets. Collateral may be taken to ensure the bid is genuine. A price floor may also be optionally implemented by the seller so that only bids exceeding the floor are accepted. Once the time window elapses, bidding is closed and all outstanding bids are entered into the auction. A marginal price auction system ranks the bids according to their monetary amount and allocates tickets to the highest bids until the allocation is exhausted. The price paid is determined by the lowest winning submitted bid. Tickets are then released to the successful bidders with a requirement to pay the ticket price within a set timeframe and any excess collateral or rebates is released back to the bidders.

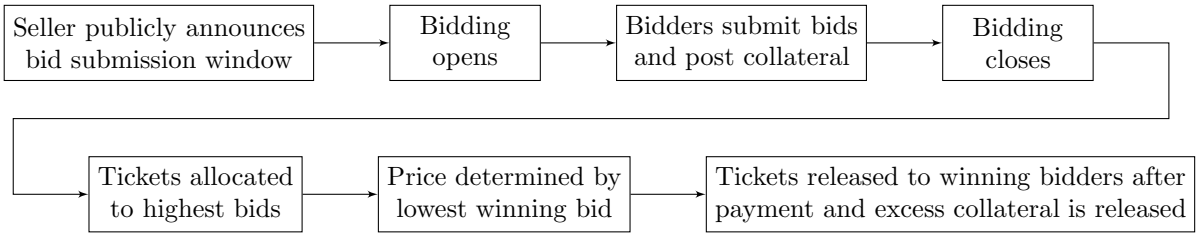
\begin{figure}[h]
    \centering

    \begin{tikzpicture}[>=latex']
        \tikzset{block/.style= {draw, rectangle, align=center,minimum width=2cm,minimum height=1cm},
        rblock/.style={draw, shape=rectangle,rounded corners=1.5em,align=center,minimum width=2cm,minimum height=1cm},
        input/.style={ 
        draw,
        trapezium,
        trapezium left angle=60,
        trapezium right angle=120,
        minimum width=2cm,
        align=center,
        minimum height=1cm
    },
        }
        \node [block]  (start) {Seller publicly announces \\ bid submission window};
        \node [block, right =1cm of start] (node1) {Bidding \\ opens};
        \node [block, right =1cm of node1] (node2) {Bidders submit bids \\ and post collateral};
        \node [block, right =1cm of node2] (node3) {Bidding \\ closes};
        \node [block, below right =1cm and -3cm of start] (node4) {Tickets allocated \\ to highest bids};
        \node [block, right =1cm of node4] (node5) {Price determined by \\ lowest winning bid};
        \node [block, right =1cm of node5] (node6) {Tickets released to winning bidders after \\payment and excess collateral is released};
        \node [coordinate, below right =0.5cm and 1cm of node3] (right) {};  
        \node [coordinate,above left =0.5cm and 1cm of node4] (left) {};  
        \path[draw,->] (start) edge (node1)
                    (node1) edge (node2)
                    (node2) edge (node3)
                    (node3.east) -| (right) -- (left) |- (node4)
                    (node4) edge (node5)
                    (node5) edge (node6)
                    ;
    \end{tikzpicture}

    \caption{Timeline of Key Steps in a Marginal Price Ticketing System}
    \label{FigMPS}
\end{figure}

The protocol for the marginal price allocation mechanism is summarized in Fig. \ref{FigMechanism}. After the bidding window is opened, users are first required to validate their identities if they have not done so prior. This entails signing up to the protocol so that an unique identifier can be attributed to the user (see Ref. \cite{McMurry2017}). For users wishing to bid multiple units, multiple identifiers should be provided by the user. These should ideally be the identities of the individuals hoping to attend the event. Such identification is crucial to allow us to associate a user submitting multiple bids as a proxy for multiple natural persons submitting multiple one unit bids. This allows us to ensure the validity of Theorem \ref{Theorem1} and also reduce potential malicious activity such as intentionally bidding in large quantities by ticket scalpers to reduce overall available supply.

\begin{figure}[h]
    \centering

    \begin{tikzpicture}[>=latex']
        \tikzset{block/.style= {draw, rectangle, align=center,minimum width=6.5cm,minimum height=1.5cm},
        rblock/.style={draw, shape=rectangle,rounded corners=1.5em,align=center,minimum width=5cm,minimum height=1cm},
        input/.style={ 
        draw,
        trapezium,
        trapezium left angle=60,
        trapezium right angle=120,
        minimum width=2cm,
        align=center,
        minimum height=1cm
    },
        }
        \node [block]  (start) {1. Bidding window opens and \\ price floor is announced};
        \node [block, below =1cm of start] (node1) {2. Protocol validates identity \\ of potential bidders};
        \node [block, below =1cm of node1] (node2) {3. Protocol receives bids \\ and collateral};
        \node [block, below =1cm of node2] (node3) {4. OPTIONAL:\\ Disclose indicative final price \\ to stimulate bidding};
        \node [block, below =1cm of node3] (node4) {5. Bidding window closes};
        \node [block, right =1cm of node4] (node5) {6. Bids ranked according to descending \\ price order};
        \node [block, above =1cm of node5] (node6) {7. Tickets allocated to highest \\ bids until exhausted or price \\ floor is reached};
        \node [block, above =1cm of node6] (node7) {8. Price of tickets determined \\ by price of lowest winning bid};
        \node [block, above =1cm of node7] (node8) {9. Tickets released to successful \\ bidders after payment and excess \\ collateral is released};
        \node [block, above =1cm of node8] (node9) {10. OPTIONAL:\\ Rebates may be distributed if settlement \\ price greatly exceeds floor price};

        \node [coordinate, below right =0.5cm and 1cm of node3] (right) {};  
        \node [coordinate,above left =0.5cm and 1cm of node4] (left) {};  
        \path[draw,->] (start) edge (node1)
                    (node1) edge (node2)
                    (node2) edge (node3)
                    (node3) edge (node4)
                    (node4) edge (node5)
                    (node5) edge (node6)
                    (node6) edge (node7)
                    (node7) edge (node8)
                    (node8) edge (node9)
                    ;
    \end{tikzpicture}

    \caption{Description of Steps in a Marginal Price Ticketing Protocol}
    \label{FigMechanism}
\end{figure}
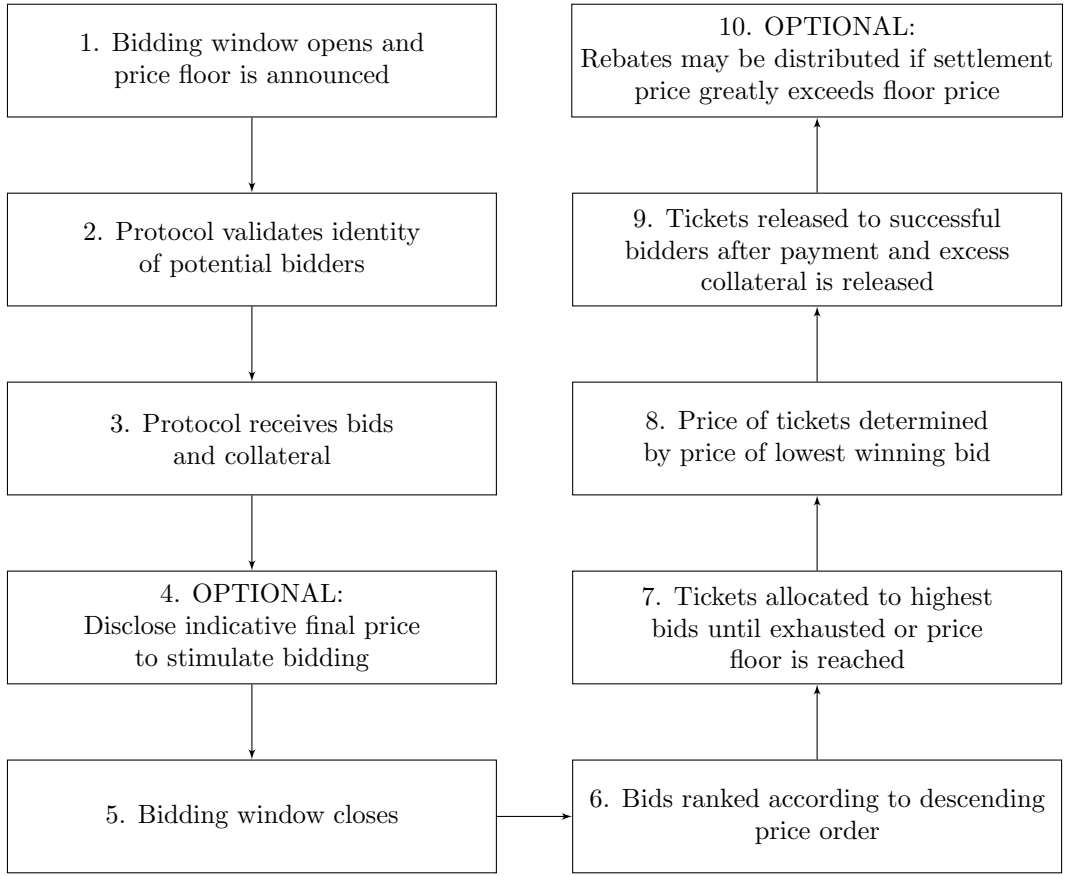

Once user identification is validated, bids may be submitted through the protocol and bids exceeding the price floor are entered into the central database. Ideally, collateral equalling 100\% of the bid amount should also be posted concurrently with the bid to ensure the bid is genuine. This may be relaxed to cover less than 100\% if additional guarantees can be put in place to ensure the bid is honest (e.g. the number of times the user has bid, the number of verified identities associated with the user, etc). This step can also provide useful information to the event organizers to gauge the popularity of the event. If the number of submitted bids greatly exceeds capacity, it could allow organizers to schedule additional shows to increase supply.

Next, the event organizers may optionally choose to disclose an indicative final price prior to the end of the bidding window to stimulate bidding. This could be as rudimentary as determining the lowest winning bid of all the submitted bids up until this time. However, since the auction is no longer sealed-bid, its dynamics may be affected and the optimal bidding strategy may not be the one proven in Theorem \ref{Theorem1}.

Once the bidding window elapses, the bidding webpage closes and the protocol no longer accepts incoming bids. The protocol then initiates a marginal price auction on all outstanding bids (see Algorithm \ref{CodeAuction}). Bids are ranked in descending price order and tickets are allocated to the highest bids until the ticket allocation is exhausted, and the price of all tickets is determined by the lowest winning bid. In the case of multiple bids at the lowest winning bid price, a randomized lottery or the chronological order of the bids may be used to allocate the remaining tickets.

After the auction is executed, the tickets are released to the successful bidders and any excess collateral is released. If the collateral amount is less than the final ticket price, the bidder may be required to pay the remaining amount within a predetermined settlement period. Optionally, a rebate (both monetary and/or non-monetary) could be distributed to the winning bidders after the auction should the final settlement price greatly exceed the original price floor ticket price. Its rationale is explained in the next section.

\begin{algorithm}
\setlength{\commentlen}{38ex}
\caption{Marginal Price Ticket Auction}\label{CodeAuction}
\begin{algorithmic}
\Require $\textbf{b}=\left(b_1, b_2, \ldots  b_i, \ldots b_N\right)$ \Comment{\makebox[\commentlen][l]{Submitted bids in chronological order}}
\Require $m$ \Comment{\makebox[\commentlen][l]{Price floor}}
\Require $K$ \Comment{\makebox[\commentlen][l]{Number of available tickets}}

\If{$N \leq K$}
    \State \Return User identifiers of $\textbf{b}$ and ticket price $m$
\ElsIf{$N > K$}
    \State $\textbf{c} \gets$ DescendingSort(\textbf{b})
    \State \Return User identifiers of $c_i$ with $i \leq K$ and ticket price $c_K$
\EndIf
\end{algorithmic}
\end{algorithm}

\section{Properties and Proofs}

A marginal price auction system has a number of nice game theoretic properties that allows the system to more efficiently allocate tickets based on the user's individual valuations. In essence, the marginal price auction system allocates the tickets to the group with the highest utility for the tickets, as opposed to a first-come-first-serve allocation in conventional ticketing systems. First, we prove that for rational bidders with demand for only one ticket, the optimal strategy for each bidder is to bid their true value of the item. From this, we show that the marginal price auction system extracts economic rents for the seller that is greater than or equal to the rents extracted from the first-come-first-serve system. We also show that the total valuation of successful bidders from the marginal price auction system is greater than or equal to the total valuation of the successful bidders from the first-come-first-serve system. This increases allocative efficiency and allots the limited number of tickets available to the group of bidders with the highest valuations. Finally, we show that the system offers a partial solution to the ticket scalping problem by proving that it is not optimal to buy from scalpers in the case of time-invariant bidder valuations.

The first theorem is a standard result of marginal price auction systems found in most auction theory textbooks. The exposition used here is based on a variation of the proof found in Ref. \cite{Krishna2003}. Throughout this section we assume that each bidder has demand for one ticket only. This is a valid assumption in the case of event ticketing problems as one person can only maximally enjoy one unit of the ticket by being physically present at the event. We relax this one ticket assumption in the protocol implementation description by introducing an identity verification mechanism so that a user submitting multiple bids can be regarded as a proxy for multiple natural persons submitting multiple one unit bids. For ease of exposition we regard users that bid at exactly the final price as losing the bid (i.e. they are left without a ticket). For physical implementation purposes, a randomization procedure may be used so that all bidders who bid at exactly the final price is entered into a lottery and a subset is randomly chosen to be allocated the remaining tickets.

\begin{theorem}
\label{Theorem1}
In a marginal price auction with single-unit bidder demand, the optimal strategy for all bidders is to bid their own true valuation.
\end{theorem}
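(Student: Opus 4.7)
The plan is to show that truthful bidding is a weakly dominant strategy for each bidder, adapting the classical second-price-auction argument to the multi-unit, uniform-price setting with single-unit demand. The key reduction is to collapse the auction, from the viewpoint of a fixed bidder $i$ with valuation $v_i$, into an equivalent contest against a scalar threshold that depends only on the other bids.

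First I would fix the profile $b_{-i}$ of the other $N-1$ bids and let $p_i$ denote the $(K-1)$-th largest entry of $b_{-i}$ (or the price floor $m$ if fewer than $K-1$ other bids exceed it). Under the tie-breaking convention stated just before the theorem, bidder $i$ wins a ticket exactly when $b_i > p_i$, and in that event pays exactly $p_i$. The reason is that $b_i > p_i$ places $b_i$ strictly among the top $K-1$ overall bids, so the $K$-th highest bid (the settlement price) is $p_i$, a quantity determined entirely by the others; whereas $b_i \le p_i$ leaves $i$ at overall rank $K$ or lower, making $i$ at best the marginal bidder, who loses by convention. The payoff from bidding $b_i$ therefore reduces to $(v_i - p_i)\,\mathbf{1}\{b_i > p_i\}$, depending on $b_{-i}$ only through the scalar $p_i$.

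With this reduction in hand, I would run the standard two-case analysis. For any over-bid $b_i > v_i$, the outcome coincides with truthful bidding unless $v_i \le p_i < b_i$, in which case the deviation wins but at the non-positive surplus $v_i - p_i$, while truthful bidding earns $0$. For any under-bid $b_i < v_i$, the outcomes coincide unless $b_i \le p_i < v_i$, in which case the deviation earns $0$ while truthful bidding wins at the strictly positive surplus $v_i - p_i$. In every case $b_i = v_i$ does at least as well as the deviation, regardless of $b_{-i}$, establishing weak dominance and therefore optimality.

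The main obstacle, and the point I would flag most carefully, is the handling of the marginal bidder whose own bid would otherwise pin down the settlement price. Without the ``tied bids lose'' convention, the $K$-th ranked bidder would pay exactly their own bid, breaking the price-independence that the dominance argument relies on; the convention restores price-independence by removing the marginal bidder from the winning set, which is precisely what allows the reduction to a scalar-threshold contest to go through cleanly. A minor secondary issue is the degenerate regime $N \le K$ handled by Algorithm~\ref{CodeAuction}, where all bids clear at the floor $m$; there any bid at or above $m$ is weakly optimal, so truthful bidding remains optimal as well.
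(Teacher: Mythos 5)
Your proof is correct, and the core of it --- weak dominance of $\tilde b_i = v_i$ via a case split on overbidding versus underbidding, with the observation that the deviation only changes the outcome when the threshold falls between $b_i$ and $v_i$ --- is the same argument the paper gives. The genuine difference is your preliminary reduction: you define the threshold $p_i$ as the $(K-1)$-th largest of the \emph{opponents'} bids (floored at $m$), so that both the win/lose event and the price paid are functions of $b_{-i}$ alone, before any case analysis begins. The paper instead works directly with $p = c_K$, the $K$-th order statistic of \emph{all} $N$ bids including bidder $i$'s own, which is why its underbidding Case 2 must track the price shifting from $c_K$ to $\tilde p = c_{K-1}$ when bidder $i$ raises their bid into the winning range --- precisely the wrinkle your reduction eliminates, since $c_{K-1}$ of the original vector and your $p_i$ are the same quantity when $b_i$ sits at rank $K$ or below. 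Your version buys a cleaner and slightly more rigorous argument (the price independence that dominance arguments need is made explicit rather than implicit), and your remarks on the tie-breaking convention and the degenerate $N \le K$ regime address edge cases the paper's proof passes over; the paper's version buys notational economy by reusing the single sorted vector $\mathbf{c}$ throughout. Both are sound.
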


\begin{proof}

Let $N$ denote the number of bidders in the auction and $K$ denote the number of available units with $N > K$. Also, let $v_i$ denote bidder $i$'s valuation for one unit of the item, $b_i$ denote bidder $i$'s submitted single-unit bid for the item, and let $\textbf{c}=\left(c_1, c_2, \ldots  c_i, \ldots c_N\right)$ denote the $N$-vector of submitted bids by the $N$ bidders arranged in descending price order (similarly $\textbf{c}^{-i}$ is the descending order bid vector without bid $i$).

The final price set by the marginal price auction is given by the lowest winning bid at
\begin{equation*}
    p=c_{K}
\end{equation*}

The payoff to bidder $i$ is given by the payoff function
\begin{equation*}
    P_i(v_i, \textbf{c}) =
        \begin{cases}
          v_i-p & \text{if } b_i > p\\
          0 & \text{otherwise}
        \end{cases}   
\end{equation*}

We claim that $b_i=v_i$. Suppose by contradiction that $b_i>v_i$. We have the following cases:

\textit{Case 1}: $p \geq b_i>v_i$. Bidder $i$ loses the auction and receives a payoff of 0 regardless of their action.

\textit{Case 2}: $b_i > p \geq v_i$. The payoff to bidder $i$ is $P_i(v_i, \textbf{c})=v_i-p \leq 0$ and is weakly dominated by the alternate strategy $\Tilde{b_i}=v_i$ with payoff $P_i(v_i, \textbf{c}^{-i}, \Tilde{b_i})=0$.

\textit{Case 3}: $b_i > v_i > p$. Since both $b_i > p = c_{K}$ and $v_i > p = c_{K}$, it makes no difference bidding at $b_i$ or $v_i$ as it only permutes the location of bidder $i$'s bid in the first ${K-1}$ places of vector $\textbf{c}$. So bidder $i$ wins the bid regardless and pays the same price $p=c_{K}$.

The three exhaustive cases shows that the strategy $b_i>v_i$ is weakly dominated by the strategy $\Tilde{b_i}=v_i$. Next, suppose that $b_i<v_i$. We have the following cases:

\textit{Case 1}: $p \geq v_i > b_i$. Bidder $i$ loses the auction and receives a payoff of 0 regardless of their action.

\textit{Case 2}: $v_i > p \geq b_i$. The payoff to bidder $i$ is $P_i(v_i, \textbf{c})= 0$ and is weakly dominated by the alternate strategy $\Tilde{b_i}=v_i$ with payoff 

\begin{equation*}
    P_i(v_i, \textbf{c}^{-i}, \Tilde{b_i}) =
        \begin{cases}
          v_i-\Tilde{p} & \text{if } v_i > c_{K-1}\\
          0 & \text{otherwise}
        \end{cases}   
\end{equation*}
where $\Tilde{p}=c_{K-1}$ is now the lowest winning bid due to the insertion of bid $\Tilde{b_i}$ into the first $K-1$ slots of $\textbf{c}$.

\textit{Case 3}: $v_i > b_i > p$. As with the previous \textit{Case 3}, bidder $i$ wins the bid regardless and pays the same price $p=c_{K}$.

Thus, both strategies $b_i>v_i$ and $b_i<v_i$ are weakly dominated by $\Tilde{b_i}=v_i$. We conclude that the optimal bidding strategy for bidder $i$ is to bid their own true valuation.

\end{proof}

The theorem above is not true in general if bidders have demand for more than one unit (see Ref. \cite{Krishna2003}). Hence, an identity verification mechanism is needed so that we regard a user submitting multiple bids as proxies for multiple natural persons. The mechanism effectively allows the seller to circumvent determining the pricing of the tickets based on imperfect information and instead rely on the marginal price auction mechanism to allow bidders to reveal their own reservation price through the bidding process. The theorem above guarantees that rational bidders will reveal their own willingness-to-pay during the bidding process and disclose this information to the seller. The mechanism also allows the seller to extract more economic rents than the first-come-first-serve system, which we will prove below. We also impose a price floor at which bids must exceed to be successful at being allocated a ticket. This is typical in most modern ticketing systems (it is just the ticket price in first-come-first-serve systems).

\begin{theorem}
\label{Theorem2}
In a marginal price auction with single-unit bidder demand and price floor, the economic rents extracted is greater than or equal to the economic rents extracted from a first-come-first-serve system.
\end{theorem}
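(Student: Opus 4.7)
The plan is to reduce the comparison of economic rents to a comparison of seller revenue, since the cost of producing the event is identical across both allocation mechanisms. Under this reduction, it suffices to show that the revenue generated by the marginal price auction with price floor $m$ is at least the revenue generated by a first-come-first-serve system selling at the same ticket price $m$.

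First, I would characterize the set of active participants under each mechanism. By Theorem \ref{Theorem1}, every rational bidder in the marginal price auction submits their true valuation, so the set of bids that clear the floor is exactly $\{v_i : v_i \geq m\}$. In the first-come-first-serve system with ticket price $m$, the buyers willing to purchase are also exactly those with $v_i \geq m$, since any buyer with $v_i < m$ would refuse. Let $N$ denote the number of such participants and $K$ the number of available tickets.

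Next, I would carry out a case analysis comparing $N$ and $K$. If $N \leq K$, both mechanisms allocate all $N$ tickets at the floor price $m$ (by Algorithm \ref{CodeAuction} for the auction), so revenues coincide at $Nm$. If $N > K$, the first-come-first-serve system allocates exactly $K$ tickets at price $m$, generating revenue $Km$, while the marginal price auction allocates $K$ tickets at the $K$-th highest bid $c_K$, generating revenue $Kc_K$. Since every entered bid satisfies $b_i \geq m$ and there are strictly more than $K$ such bids, we have $c_K \geq m$, giving $Kc_K \geq Km$. Combining the two cases establishes the inequality.

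The main subtlety, rather than a substantive obstacle, is justifying that Theorem \ref{Theorem1} continues to apply in the presence of a price floor. A short argument suffices: bidders with $v_i < m$ rationally abstain since any bid surviving the floor would commit them to a nonpositive payoff, while for bidders with $v_i \geq m$ the dominance analysis of Theorem \ref{Theorem1} carries over unchanged once the bidder pool is restricted to those above the floor. With truthful bidding secured, the case analysis above closes the proof immediately.
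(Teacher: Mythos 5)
Your proposal is correct and follows essentially the same route as the paper: a two-case analysis on whether the number of above-floor bids exceeds capacity $K$, concluding that both systems collect $m$ per ticket on the same number of tickets when supply is not exhausted, and that the auction price $c_K \geq m$ when it is. The only cosmetic difference is that you explicitly invoke Theorem \ref{Theorem1} to identify the bidder pools across the two mechanisms, whereas the paper simply compares both allocations on the same submitted bid vector.
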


\begin{proof}

    Let $\textbf{c}=\left(c_1, c_2, \ldots  c_i, \ldots c_N\right)$ denote the $N$-vector of submitted bids by the $N$ bidders arranged in descending price order. Let the price floor be denoted by \$$m$ and $K$ denote the number of units available to bid with $N > K$. We have the following cases:

    \textit{Case 1}: $c_{K} \geq m$. There is enough demand above the price floor to exhaust the supply of $K$ units available to bid. The economic rents obtained by the first-come-first-serve system is given by $mK$ (allocated to the first $K$ bidders with bids exceeding the price floor in chronological order), while the the economic rents obtained by the marginal price auction is given by $c_{K}K$. Since $c_{K} \geq m$, we have $c_{K}K \geq mK$.

    \textit{Case 2}: $c_{K} < m$. There is not enough demand above the price floor to exhaust the supply of $K$ units available to bid. The price floor ensures that only the $k<K$ bidders with $c_1, c_2, \ldots c_k \geq m$ are allocated at price \$$m$ and $K-k$ units are left unallocated. The economic rents extracted is $mk$ for both systems.

    From the two cases, we conclude that the a marginal price auction extracts economic rents that is greater than or equal to that extracted from a first-come-first-serve system.
    
\end{proof}

Below we provide a two simple examples on the different economic rents extracted by both systems.

\begin{example}
    \label{Example1}
    \normalfont
    Let the number of bidders be $N=6$ and the number of units available to bid be $K=3$. Let the price floor be $m=20$ with the chronological bid vector $\textbf{b}=\left(35, 15, 40, 20, 25, 20 \right)$. The descending price vector is then $\textbf{c}=\left(40, 35, 25, 20, 20, 15 \right)$.

    The first-come-first-serve system sets the ticket price at $m=20$ and the successful bidders are the 1st, 3rd, and 4th entries in the chronological bid vector $\textbf{b}$. The economic rents extracted for the seller is $3 \times 20 = 60$.

    The marginal price auction system sets the ticket price at $c_3 = 25$ and the successful bidders entered into the auction in the chronological order of 1st, 3rd, and 5th. The economic rents extracted for the seller is $3 \times 25 = 75$. The excess economic rents extracted amounts to \$$15$ and the 4th chronologically-ordered bidder would no longer be successful in the auction.
    
\end{example}

\begin{example}
    \normalfont
    Let the number of bidders be $N=6$ and the number of units available to bid be $K=3$. Let the price floor be $m=30$ with the chronological bid vector $\textbf{b}=\left(35, 15, 40, 20, 25, 20 \right)$. The descending price vector is then $\textbf{c}=\left(40, 35, 25, 20, 20, 15 \right)$.

    Both systems set the ticket price at the price floor $m=30$ and the successful bidders are the 1st and 3rd entries in the chronological bid vector $\textbf{b}$. The economic rents extracted for both systems is $2 \times 30 = 60$. In this scenario, the seller may consider lowering the price floor prior to the bidding window closing to allow enough bids to exceed the price floor so that all units are allocated.
    
\end{example}

The next theorem shows that the marginal price auction system has higher allocative efficiency compared to the first-come-first-serve system (see Fig. \ref{FigAllocEff}).

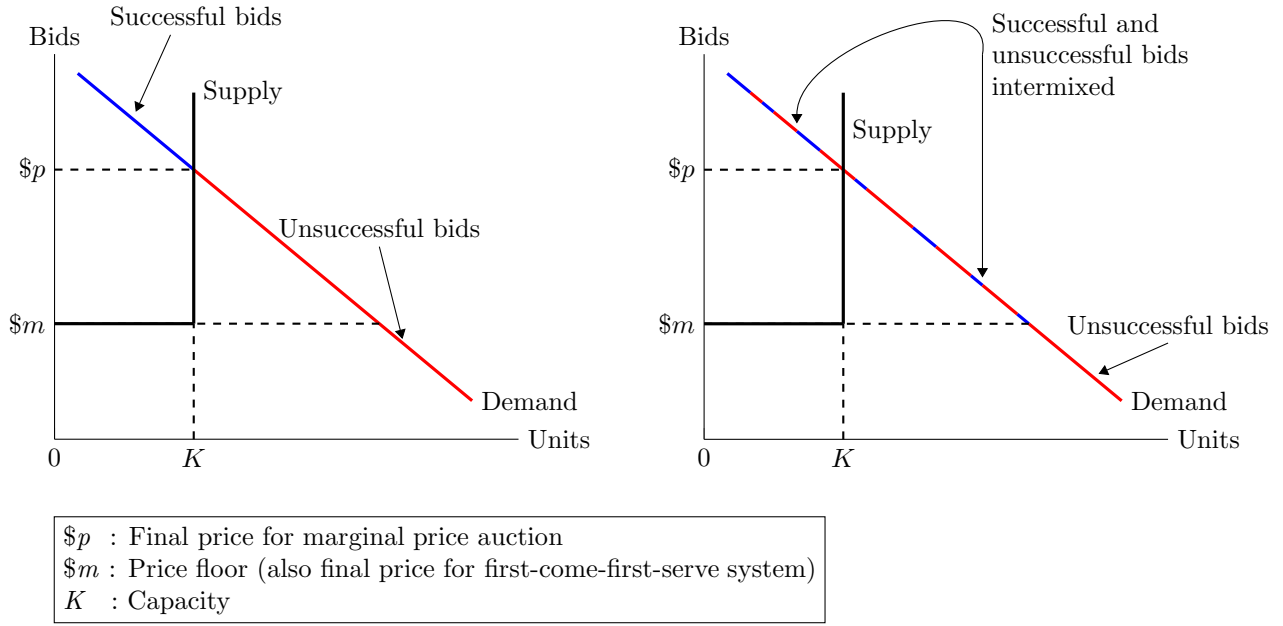
\begin{figure}[h]

\begin{center}
\hspace*{-2cm}
\begin{tikzpicture}
    \begin{axis}[
    scale = 0.9,
    xmin = 0, xmax = 10,
    ymin = 0, ymax = 10,
    axis lines* = left,
    xtick = {0}, ytick = \empty,
    clip = false,
    ]
    \addplot[color = blue, very thick] coordinates {(0.5, 9.5) (3, 7)};
    \addplot[color = red, very thick] coordinates {(3, 7) (9, 1)};
    \addplot[color = black, very thick] coordinates {(3, 3) (3, 9)};
    \addplot[color = black, very thick] coordinates {(0, 3) (3, 3)};
    
    \addplot[color = black, dashed, thick] coordinates {(0, 7) (3, 7)};
    \addplot[color = black, dashed, thick] coordinates {(0, 3) (7, 3)};
    \addplot[color = black, dashed, thick] coordinates {(3, 0) (3, 3)};
    
    \node [right] at (current axis.right of origin) {Units};
    \node [above] at (current axis.above origin) {Bids};
    \node [above] at (5, 5.2) {};
    \node [left] at (0, 7) {\$$p$};
    \node [left] at (0, 3) {\$$m$};
    \node [below] at (3, 0) {$K$};
    \node [right] at (9, 1) {Demand};
    \node [right] at (3, 9) {Supply};
    
    \node [right] at (1, 11) {Successful bids};
    \node [above] at (7, 5) {Unsuccessful bids};
    
    \draw[-Triangle] (2.5, 10.5) to (1.75, 8.5);
    \draw[-Triangle] (7, 5) to (7.5, 2.6);

    \node [below, draw, align = left] at (8.3, -2) {
    \$\textit{p } : Final price for marginal price auction \\
    \$\textit{m} : Price floor (also final price for first-come-first-serve system) \\
    \textit{K \space} : Capacity
    };
    
    \end{axis}

    \begin{axis}[
    scale = 0.9,
    xmin = 0, xmax = 10,
    ymin = 0, ymax = 10,
    axis lines* = left,
    xtick = {0}, ytick = \empty,
    clip = false,
    shift = {(axis cs: 14, 0)},
    ]
    \addplot[color = blue, very thick] coordinates {(0.5, 9.5) (1, 9)};
    \addplot[color = blue, very thick] coordinates {(1.25, 8.75) (1.5, 8.5)};
    \addplot[color = blue, very thick] coordinates {(2, 8) (2.5, 7.5)};
    \addplot[color = blue, very thick] coordinates {(3.25, 6.75) (3.5, 6.5)};
    \addplot[color = blue, very thick] coordinates {(4.5, 5.5) (5, 5)};
    \addplot[color = blue, very thick] coordinates {(5.75, 4.25) (6, 4)};
    \addplot[color = blue, very thick] coordinates {(6.75, 3.25) (7, 3)};

    \addplot[color = red, very thick] coordinates {(1, 9) (1.25, 8.75)};
    \addplot[color = red, very thick] coordinates {(1.5, 8.5) (2, 8)};
    \addplot[color = red, very thick] coordinates {(2.5, 7.5) (3.25, 6.75)};
    \addplot[color = red, very thick] coordinates {(3.5, 6.5) (4.5, 5.5)};
    \addplot[color = red, very thick] coordinates {(5, 5) (5.75, 4.25)};
    \addplot[color = red, very thick] coordinates {(6, 4) (6.75, 3.25)};

    \addplot[color = red, very thick] coordinates {(7, 3) (9, 1)};
    \addplot[color = black, very thick] coordinates {(3, 3) (3, 9)};
    \addplot[color = black, very thick] coordinates {(0, 3) (3, 3)};
    
    \addplot[color = black, dashed, thick] coordinates {(0, 7) (3, 7)};
    \addplot[color = black, dashed, thick] coordinates {(0, 3) (7, 3)};
    \addplot[color = black, dashed, thick] coordinates {(3, 0) (3, 3)};
    
    \node [right] at (current axis.right of origin) {Units};
    \node [above] at (current axis.above origin) {Bids};
    \node [above] at (5, 5.2) {};
    \node [left] at (0, 7) {\$$p$};
    \node [left] at (0, 3) {\$$m$};
    \node [below] at (3, 0) {$K$};
    \node [right] at (9, 1) {Demand};
    \node [right] at (3, 8) {Supply};
    
    \node [text width=3cm,right] at (6, 10) {Successful and \\unsuccessful bids intermixed};
    \node [above] at (10, 2.5) {Unsuccessful bids};
    
    \draw[-Triangle] (6, 10) to (6, 4.2);
    \draw[-Triangle] (10, 2.5) to (8.5, 1.7);
    \draw[-Triangle] (6, 10) to [out = 80, in = 90] (2, 8.2);
    \end{axis}
    
\end{tikzpicture}
\hspace*{-2cm}
\end{center}

    \caption{Ticket Allocation of a Marginal Price Auction System (L) and a First-Come-First-Serve System (R)}
    \label{FigAllocEff}
\end{figure}

\begin{theorem}
\label{Theorem3}
Assuming single-unit bidder demand and price floor, the sum of the valuations of the successful bidders in a marginal price auction system is greater than or equal to the sum of the valuations of successful bidders in a first-come-first-serve system.
\end{theorem}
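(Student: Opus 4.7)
The plan is to invoke Theorem~\ref{Theorem1} to convert the statement about bids into a statement about valuations, and then to observe that picking the top-$K$ valuations out of a fixed pool maximizes the sum over all $K$-subsets of that pool. By Theorem~\ref{Theorem1}, rational single-unit bidders in the marginal price auction bid $b_i = v_i$, so the descending-sorted bid vector $\textbf{c}$ coincides with the descending-sorted valuation vector. Consequently, the set $S_{\text{MPA}}$ of successful bidders consists of the bidders holding the $K$ largest valuations whenever all of those valuations exceed the floor $m$, and consists of every bidder with $v_i \geq m$ otherwise.

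For the first-come-first-serve system, I would first argue that a rational bidder attempts to purchase at the posted price $m$ if and only if $v_i \geq m$, since a purchase yields payoff $v_i - m$, which is negative when $v_i < m$. Hence the successful set $S_{\text{FCFS}}$ consists of the first $\min\bigl(K,\,|\{i : v_i \geq m\}|\bigr)$ bidders in chronological order drawn from $\{i : v_i \geq m\}$. With both successful sets characterized, I would then mirror the case split used in the proof of Theorem~\ref{Theorem2}.

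In Case~1, $c_K \geq m$, so supply is exhausted in both systems. Both $S_{\text{MPA}}$ and $S_{\text{FCFS}}$ are $K$-element subsets of $\{i : v_i \geq m\}$; however, $S_{\text{MPA}}$ is by construction the $K$-subset of the entire bidder pool (and hence of $\{i : v_i \geq m\}$) with the maximum possible valuation sum, while $S_{\text{FCFS}}$ is merely some $K$-subset of $\{i : v_i \geq m\}$ selected by chronological order. Summing over $S_{\text{MPA}}$ therefore yields a total at least as large as summing over $S_{\text{FCFS}}$. In Case~2, $c_K < m$, so supply is not exhausted; both systems then allocate tickets to exactly $\{i : v_i \geq m\}$, and the two valuation sums coincide. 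Combining the cases gives the desired inequality.

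I do not anticipate a substantive obstacle: the result is essentially a direct corollary of Theorem~\ref{Theorem1} together with the trivial observation that the sum of the $K$ largest elements of a finite list dominates the sum of any other $K$-element subset drawn from the same list. The step that requires the most care is precisely characterizing rational bidder participation in the first-come-first-serve system, since this is what pins down the common comparison set $\{i : v_i \geq m\}$ used in both cases.
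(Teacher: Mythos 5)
Your proposal is correct and follows essentially the same route as the paper: invoke Theorem~\ref{Theorem1} to identify bids with valuations, restrict both systems to the pool of bids at or above the floor $m$, and observe that the top-$K$ selection made by the marginal price auction dominates the chronological $K$-selection made by first-come-first-serve (with equality when fewer than $K$ bids clear the floor). Your explicit case split and your justification that rational first-come-first-serve buyers participate exactly when $v_i \geq m$ are slightly more careful than the paper's one-paragraph argument, but they formalize the same idea rather than introducing a different one.
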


\begin{proof}

    Let $N$ denote the number of bidders in the auction. Let the price floor be denoted by \$$m$ and $v_i$ denote bidder $i$'s valuation for one unit of the item. Let $\textbf{b}^{b_i\geq m}=\left(b_1, b_2, \ldots  b_i, \ldots b_k\right)$ denote the $k$-vector of submitted bids that exceed the price floor arranged in chronological order with $k \leq N$, and let $\textbf{c}^{b_i\geq m}=\left(c_1, c_2, \ldots  c_i, \ldots c_k\right)$ be the sorted $\textbf{b}^{b_i\geq m}$ vector in descending price order.

    The marginal price auction system allocates the units based on the leading entries of vector $\textbf{c}^{b_i\geq m}$ while the first-come-first-serve system allocates units based on the leading entries of vector $\textbf{b}^{b_i\geq m}$. Since $\textbf{c}^{b_i\geq m}$ is sorted based on descending price order, the sum of its leading entries is greater than or equal to the sum of the leading entries of $\textbf{b}^{b_i\geq m}$. From Theorem \ref{Theorem1}, we know that the optimal bidding strategy is $b_i=v_i$. Hence the sum of the bids is equal to the sum of the valuations. Thus, the sum of the valuations of the successful bidders in a marginal price auction system is greater than or equal to the sum of the valuations of successful bidders in a first-come-first-serve system.
    
\end{proof}

While the marginal price auction system does improve overall allocative efficiency, it nevertheless erodes consumer surplus and redistributes the surplus to the sellers (see Fig. \ref{FigSurplus}). To ensure that consumers still enjoy some benefits of switching to the marginal price ticketing system, a welfare transfer in the form of a rebate and/or excess collateral return mechanism may be implemented if the final settlement price greatly exceeds the original price floor ticket price (see Fig. \ref{FigReturn}). Non-monetary rebates (e.g. merchandise) may also be distributed if there is perceived value by the bidders. It is important to note that this rebate must be done after the auction has taken place, and should not occur frequently enough so as to change the expectations of the bidders. Changing expectations will result in deviations in the optimal strategy of bidders, and could render Theorem \ref{Theorem1} invalid.

\begin{figure}[h]

\begin{center}
\hspace*{-2cm}
\begin{tikzpicture}
    \begin{axis}[
    scale = 0.9,
    xmin = 0, xmax = 10,
    ymin = 0, ymax = 10,
    axis lines* = left,
    xtick = {0}, ytick = \empty,
    clip = false,
    ]
    \addplot[color = blue, very thick] coordinates {(0.5, 9.5) (3, 7)};
    \addplot[color = red, very thick] coordinates {(3, 7) (9, 1)};
    \addplot[color = black, very thick] coordinates {(3, 3) (3, 9)};
    \addplot[color = black, very thick] coordinates {(0, 3) (3, 3)};
    
    \addplot[color = black, dashed, thick] coordinates {(0, 7) (3, 7)};
    \addplot[color = black, dashed, thick] coordinates {(0, 3) (7, 3)};
    \addplot[color = black, dashed, thick] coordinates {(3, 0) (3, 3)};
    
    \node [right] at (current axis.right of origin) {Units};
    \node [above] at (current axis.above origin) {Bids};
    \node [above] at (5, 5.2) {};
    \node [left] at (0, 7) {\$$p$};
    \node [left] at (0, 3) {\$$m$};
    \node [below] at (3, 0) {$K$};
    \node [right] at (9, 1) {Demand};
    \node [right] at (3, 9) {Supply};
    
    \node [right] at (1, 11) {Successful bids};
    \node [above] at (7, 5) {Unsuccessful bids};
    
    \draw[-Triangle] (2.5, 10.5) to (1.75, 8.5);
    \draw[-Triangle] (7, 5) to (7.5, 2.6);

    \node [below, draw, align = left] at (8.3, -2) {
    \$\textit{p } : Final price for marginal price auction \\
    \$\textit{m} : Price floor (also final price for first-come-first-serve system) \\
    \textit{K \space} : Capacity \\
    \fcolorbox{black}{teal!10}{\makebox(0.13cm,0.13cm){\space}} : Consumer surplus \\
    \fcolorbox{black}{violet!10}{\makebox(0.13cm,0.13cm){\space}} : Seller surplus
    };

    \fill[teal, opacity = 0.1] (0.5, 7) -- (3, 7) -- (0.5, 9.5);
    \fill[violet, opacity = 0.1] (0.5, 0) -- (3, 0) -- (3, 7) -- (0.5, 7);

    \end{axis}

    \begin{axis}[
    scale = 0.9,
    xmin = 0, xmax = 10,
    ymin = 0, ymax = 10,
    axis lines* = left,
    xtick = {0}, ytick = \empty,
    clip = false,
    shift = {(axis cs: 14, 0)},
    ]
    \addplot[color = blue, very thick] coordinates {(0.5, 9.5) (1, 9)};
    \addplot[color = blue, very thick] coordinates {(1.25, 8.75) (1.5, 8.5)};
    \addplot[color = blue, very thick] coordinates {(2, 8) (2.5, 7.5)};
    \addplot[color = blue, very thick] coordinates {(3.25, 6.75) (3.5, 6.5)};
    \addplot[color = blue, very thick] coordinates {(4.5, 5.5) (5, 5)};
    \addplot[color = blue, very thick] coordinates {(5.75, 4.25) (6, 4)};
    \addplot[color = blue, very thick] coordinates {(6.75, 3.25) (7, 3)};

    \addplot[color = red, very thick] coordinates {(1, 9) (1.25, 8.75)};
    \addplot[color = red, very thick] coordinates {(1.5, 8.5) (2, 8)};
    \addplot[color = red, very thick] coordinates {(2.5, 7.5) (3.25, 6.75)};
    \addplot[color = red, very thick] coordinates {(3.5, 6.5) (4.5, 5.5)};
    \addplot[color = red, very thick] coordinates {(5, 5) (5.75, 4.25)};
    \addplot[color = red, very thick] coordinates {(6, 4) (6.75, 3.25)};

    \addplot[color = red, very thick] coordinates {(7, 3) (9, 1)};
    \addplot[color = black, very thick] coordinates {(3, 3) (3, 9)};
    \addplot[color = black, very thick] coordinates {(0, 3) (3, 3)};
    
    \addplot[color = black, dashed, thick] coordinates {(0, 7) (3, 7)};
    \addplot[color = black, dashed, thick] coordinates {(0, 3) (7, 3)};
    \addplot[color = black, dashed, thick] coordinates {(3, 0) (3, 3)};
    
    \node [right] at (current axis.right of origin) {Units};
    \node [above] at (current axis.above origin) {Bids};
    \node [above] at (5, 5.2) {};
    \node [left] at (0, 7) {\$$p$};
    \node [left] at (0, 3) {\$$m$};
    \node [below] at (3, 0) {$K$};
    \node [right] at (9, 1) {Demand};
    \node [right] at (3, 8) {Supply};
    
    \node [text width=3cm,right] at (6, 10) {Successful and \\unsuccessful bids intermixed};
    \node [above] at (10, 2.5) {Unsuccessful bids};
    
    \draw[-Triangle] (6, 10) to (6, 4.2);
    \draw[-Triangle] (10, 2.5) to (8.5, 1.7);
    \draw[-Triangle] (6, 10) to [out = 80, in = 90] (2, 8.2);

    \fill[teal, opacity = 0.1] (0.5, 9.5) -- (1, 9) -- (0.5, 9);
    \fill[teal, opacity = 0.1] (0.5, 3) -- (1, 3) -- (1, 9) -- (0.5, 9);

    \fill[teal, opacity = 0.1] (1.25, 8.75) -- (1.5, 8.5) -- (1.25, 8.5);
    \fill[teal, opacity = 0.1] (1.25, 3) -- (1.5, 3) -- (1.5, 8.5) -- (1.25, 8.5);

    \fill[teal, opacity = 0.1] (2, 8) -- (2.5, 7.5) -- (2, 7.5);
    \fill[teal, opacity = 0.1] (2, 3) -- (2.5, 3) -- (2.5, 7.5) -- (2, 7.5);

    \fill[teal, opacity = 0.1] (3.25, 6.75) -- (3.5, 6.5) -- (3.25, 6.5);
    \fill[teal, opacity = 0.1] (3.25, 3) -- (3.5, 3) -- (3.5, 6.5) -- (3.25, 6.5);

    \fill[teal, opacity = 0.1] (4.5, 5.5) -- (5, 5) -- (4.5, 5);
    \fill[teal, opacity = 0.1] (4.5, 3) -- (5, 3) -- (5, 5) -- (4.5, 5);

    \fill[teal, opacity = 0.1] (5.75, 4.25) -- (6, 4) -- (5.75, 4);
    \fill[teal, opacity = 0.1] (5.75, 3) -- (6, 3) -- (6, 4) -- (5.75, 4);

    \fill[teal, opacity = 0.1] (6.75, 3.25) -- (7, 3) -- (6.75, 3);

    \fill[violet, opacity = 0.1] (0.5, 0) -- (1, 0) -- (1, 3) -- (0.5, 3);
    \fill[violet, opacity = 0.1] (1.25, 0) -- (1.5, 0) -- (1.5, 3) -- (1.25, 3);
    \fill[violet, opacity = 0.1] (2, 0) -- (2.5, 0) -- (2.5, 3) -- (2, 3);
    \fill[violet, opacity = 0.1] (3.25, 0) -- (3.5, 0) -- (3.5, 3) -- (3.25, 3);
    \fill[violet, opacity = 0.1] (4.5, 0) -- (5, 0) -- (5, 3) -- (4.5, 3);
    \fill[violet, opacity = 0.1] (5.75, 0) -- (6, 0) -- (6, 3) -- (5.75, 3);
    \fill[violet, opacity = 0.1] (6.75, 0) -- (7, 0) -- (7, 3) -- (6.75, 3);
    \end{axis}
    
\end{tikzpicture}
\hspace*{-2cm}
\end{center}

    \caption{Consumer and Seller Surplus of a Marginal Price Auction System (L) and a First-Come-First-Serve System (R)}
    \label{FigSurplus}
\end{figure}

\begin{figure}[h]

\begin{center}
\begin{tikzpicture}
    \begin{axis}[
    scale = 1.2,
    xmin = 0, xmax = 10,
    ymin = 0, ymax = 10,
    axis lines* = left,
    xtick = {0}, ytick = \empty,
    clip = false,
    ]
    \addplot[color = blue, very thick] coordinates {(0.5, 9.5) (3, 7)};
    \addplot[color = red, very thick] coordinates {(3, 7) (9, 1)};
    \addplot[color = black, very thick] coordinates {(3, 3) (3, 9)};
    \addplot[color = black, very thick] coordinates {(0, 3) (3, 3)};
    
    \addplot[color = black, dashed, thick] coordinates {(0, 7) (3, 7)};
    \addplot[color = black, dashed, thick] coordinates {(0, 3) (7, 3)};
    \addplot[color = black, dashed, thick] coordinates {(3, 0) (3, 3)};
    
    \node [right] at (current axis.right of origin) {Units};
    \node [above] at (current axis.above origin) {Bids};
    \node [above] at (5, 5.2) {};
    \node [left] at (0, 7) {\$$p$};
    \node [left] at (0, 3) {\$$m$};
    \node [below] at (3, 0) {$K$};
    \node [right] at (9, 1) {Demand};
    \node [right] at (3, 9) {Supply};
    
    \node [right] at (1, 11) {Successful bids};
    \node [above] at (7, 5) {Unsuccessful bids};

    \draw[|-|] (-1, 7) to (-1, 4.5);
    \node [below, align = left] at (-2.3, 6) {Rebate};
    
    \draw[-Triangle] (2.5, 10.5) to (1.75, 8.5);
    \draw[-Triangle] (7, 5) to (7.5, 2.6);

    \node [below right, draw, align = left] at (9, 10.5) {
    \$\textit{p } : Final price \\
    \$\textit{m} : Price floor \\
    \textit{K \space} : Capacity \\
    \fcolorbox{black}{teal!10}{\makebox(0.13cm,0.13cm){\space}} : Consumer surplus \\
    \fcolorbox{black}{violet!10}{\makebox(0.13cm,0.13cm){\space}} : Seller surplus \\
    \fcolorbox{black}{orange!10}{\makebox(0.13cm,0.13cm){\space}} : Welfare transfer
    };

    \fill[teal, opacity = 0.1] (0.5, 7) -- (3, 7) -- (0.5, 9.5);
    \fill[violet, opacity = 0.1] (0.5, 0) -- (3, 0) -- (3, 4.5) -- (0.5, 4.5);
    \fill[orange, opacity = 0.1] (0.5, 4.5) -- (3, 4.5) -- (3, 7) -- (0.5, 7);
    \end{axis}
    
\end{tikzpicture}
\end{center}
    \caption{Welfare Transfer from Sellers to Consumers from Rebate}
    \label{FigReturn}
\end{figure}
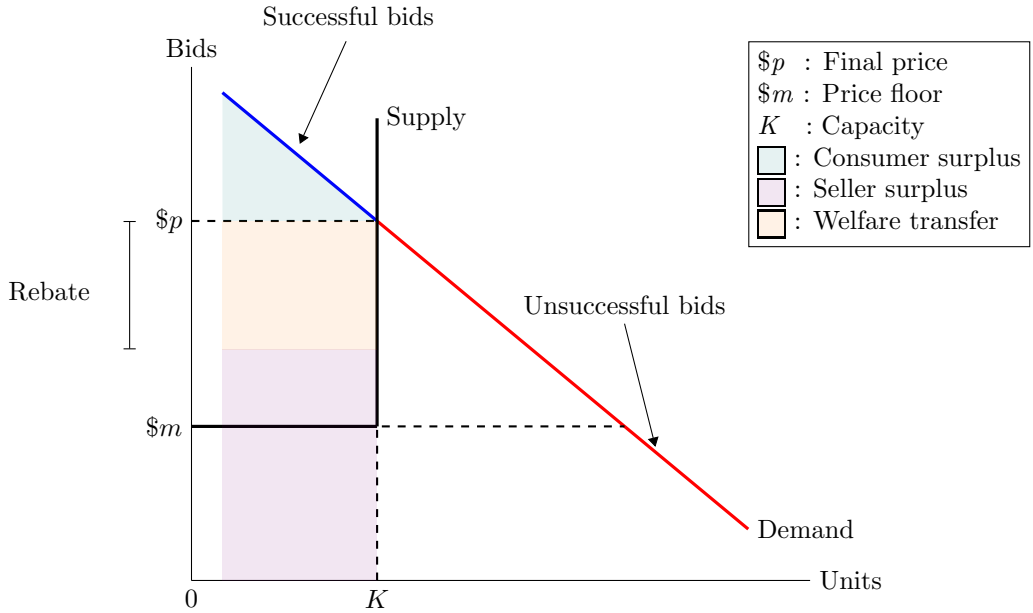

Finally, we prove that it is not optimal to buy from ticket scalpers in the case of time-invariant bidder valuations.

\begin{theorem}
    \label{Theorem4}
    If individual valuations are time-invariant, then it is not optimal for bidders to buy from ticket scalpers after an unsuccessful bid.
\end{theorem}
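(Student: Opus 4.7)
The plan is to combine Theorem \ref{Theorem1} with the time-invariance hypothesis and the economics of resale. First, I would invoke Theorem \ref{Theorem1} to conclude that the unsuccessful bidder submitted $b_i=v_i$ in the auction. Using the convention from the preceding section that bidders at the marginal price $c_K$ lose, ``unsuccessful'' means $b_i\leq p$ where $p=c_K$ is the final settlement price, so $v_i\leq p$.

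Next, I would characterize the scalper's resale price $p_s$. Any scalper holding a ticket must have acquired it through the auction (the only primary distribution channel in the model), and therefore paid at least $p$ per ticket. A scalper, by definition, seeks positive margin on resale, so the offered price must satisfy $p_s>p$; a reseller quoting $p_s\leq p$ would be accepting a nonpositive margin and would not qualify as scalping in the economic sense. This pins down the relevant comparison inequality $p_s>p\geq v_i$.

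Finally, I would compute the bidder's payoff from the scalper's offer. By time-invariance, the bidder's valuation at the moment of the resale offer is still $v_i$, so accepting yields payoff $v_i-p_s$, while declining yields $0$. Chaining the inequalities gives $v_i-p_s\leq p-p_s<0$, so declining strictly dominates accepting and the bidder optimally refuses. The main subtle point will be the justification that rational scalpers quote $p_s>p$; this is not a game-theoretic deduction but a definitional/economic observation that a loss-making reseller is not a scalper. A minor technical wrinkle is the treatment of the tie case $v_i=p$, which is handled uniformly by the paper's tie-breaking convention and does not affect the strict payoff inequality $v_i-p_s<0$.
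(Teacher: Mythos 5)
Your proposal is correct and follows essentially the same route as the paper: invoke Theorem \ref{Theorem1} to equate the losing bid with the valuation, bound the scalper's price from below by the settlement price, and use time-invariance to conclude the resale payoff is negative. The only differences are cosmetic — you are slightly more careful about the tie case $v_i = p$ (where the paper writes the strict inequality $v_i < p$) and you place the strictness on the scalper's margin instead, but the chain of inequalities and the conclusion are the same.
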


\begin{proof}
    Let $p$ be the final price set by the marginal price auction and let $v_i$ denote bidder $i$'s valuation for one ticket. If bidder $i$ is unsuccessful in the auction, by Theorem \ref{Theorem1}, the individual valuation is less than the final price ($v_i < p$). For economically rational ticket scalpers, the scalping price $\Tilde{p}$ is given by $\Tilde{p} \geq p$. Assuming individual valuations are time-invariant, we have $v_i < p \leq \Tilde{p}$. So bidder $i$'s valuation of the ticket is below the scalping price, and the bidder is better off not buying the ticket from the scalper.
\end{proof}

Theorem \ref{Theorem4} is particularly relevant for event ticketing purposes as it partially solves the ticket scalping problem. Although not necessarily a negative externality in the economics sense as ticket scalpers do serve a purpose to equilibrate limited supply with demand, it is nevertheless regarded as socially unacceptable and banned in most countries due to the erosion of consumer surplus (for the case of Australia, see Ref. \cite{TicketReselling2017}). The marginal price auction mechanism partially solves this as bidders with time-invariant valuations will refrain from purchasing tickets from scalpers. Therefore, individuals that could potentially buy from scalpers are restricted to the subset of bidders that have time-varying valuations, new bidders that did not participate in the original auction and/or bidders who may wish to obtain better seating for the event.

\section{Simulation}

We provide a simple simulation of the marginal price auction system summarized in Table \ref{TableSim}. The simulation assumes three scenarios covering small, medium, and large events with capacity $K=100$, $1000$, and $10000$ respectively. We also assume a price floor of $m=100$ with the number of bidders equaling $1.5 \times K$ and have valuations according to the normal distribution N($\mu=125$, $\sigma=25$) (see Ref. \cite{Koebert2023}). The emphasis here is not on the assumptions and such analysis is best left to the econometricians. Here we focus on key distinctive features of the marginal price ticket allocation system.

\begin{table}
\setlength\tabcolsep{0pt}
\begin{tabular*}{\textwidth}{@{\extracolsep{\fill}} ll *{4}{r} }
\toprule
  && \multicolumn{4}{c}{Summary Statistics} \\
\cmidrule{3-6} 
  && \multicolumn{3}{c}{Total} 
   & \multicolumn{1}{c}{Per Person} \\
\cmidrule{3-5} \cmidrule{6-6}
\mr{} & Min & Mean & Max & Mean \\
\midrule
$K=100$  
& Ticket Price 
    & 101.09 & 111.91 & 123.12 & 111.91\\
& Excess Economic Rents
    & 109.00 & 1191.17 & 2312.00 & 11.91\\
& Excess Valuation
    & 38.12 & 542.22 & 1078.50 & 5.42\\
& Excess Consumer Surplus
    & -1616.24 & -648.94 & 107.20 & -6.49\\
\midrule
$K=1000$
& Ticket Price 
    & 108.52 & 111.57 & 114.83 & 111.57\\
& Excess Economic Rents
    & 8520.0 & 11570.63 & 14830.00 & 11.57\\
& Excess Valuation
    & 4083.07 & 5542.01 & 7539.24 & 5.54\\
& Excess Consumer Surplus
    & -9169.78 & -6028.62 & -3448.89 & -6.03\\
\midrule
$K=10000$
& Ticket Price 
    & 110.47 & 111.53 & 112.58 & 111.53\\
& Excess Economic Rents
    & 104700.00 & 115342.40 & 125800.00 & 11.53\\
& Excess Valuation
    & 50447.43 & 55576.80 & 61630.25 & 5.56\\
& Excess Consumer Surplus
    & -68433.29 & -59765.60 & -50472.87 & -5.98\\
\bottomrule
\end{tabular*} 

\medskip
Note: The constants used in the simulation are: $m=100$ and $N=1.5 \times K$. The random sampling distribution used is the normal distribution $\mu=125$ and $\sigma=25$. The simulation was run 1000 times, its code is available at \url{https://github.com/williamfu54/Marginal-Price-Auction-Simulation}.

\caption{Summary Statistics of Marginal Price Auction Simulation}
\label{TableSim}
\end{table}

The simulation substantiates the proofs of Theorem \ref{Theorem2} and Theorem \ref{Theorem3}. We see an increase in both economic rents extracted and total bidder valuation from the marginal price auction system as compared to the first-come-first-serve system. However, we also see an erosion of consumer surplus due to the need to pay a higher ticket price. It may become socially unacceptable for the ticket price to be substantially above the price floor. In such cases, a rebate mechanism should be used to redistribute the surplus back to the consumers. Overall, the simulation shows that total allocative efficiency is increased by using the marginal price auction system.

\section{Conclusion}

Through this paper, we have analyzed a ticketing protocol based on the marginal price auction system. During the bidding window, bidders can submit bids for the tickets and post collateral. The protocol allocates the tickets to the highest bids and the ticket price is determined by the lowest winning bid. Tickets are then released to the successful bidders with a requirement to pay within a specified timeframe and collateral is given back to all bidders. We also proved that the mechanism allows for a more allocative efficient ticketing system. Additionally, more economic rents can be obtained by the event organizers and we also showed that it is not optimal for bidders to buy from ticket scalpers under time-invariant valuations. Finally, we provide a simple simulation to substantiate our proofs.

\newpage
\printbibliography

\end{document}